\documentclass[11pt, a4paper, oneside]{article}

\usepackage{etoolbox}

\usepackage[sorting=none, giveninits=true, doi=true, url=false, maxnames=6, pagetracker=false
]{biblatex}
\AtEveryBibitem{%
  \clearfield{issn}%
  \clearfield{month}%
  \clearfield{pages}%
  \clearlist{month}%
}
\bibliography{ref}

\usepackage{amsfonts}
\usepackage{booktabs}
\usepackage{tabularx}
\usepackage{xcolor}
\usepackage{arydshln}
\usepackage{array}
\usepackage{caption}
\usepackage{enumitem}
\usepackage{multicol}
\usepackage{amssymb}
\usepackage{setspace}
\usepackage{graphicx}
\usepackage{graphics}
\usepackage{wrapfig}
\usepackage{amsmath}
\usepackage{tabularx}
\usepackage{fontspec}
\usepackage[dvipsnames]{xcolor}
\usepackage{amsthm}
\usepackage{newtxtext, newtxmath}
\usepackage{tikz}
\usepackage{sidecap}
\usepackage{algorithm}
\usepackage{algpseudocode}
\usepackage{listings}
\usepackage{multirow}
\usepackage{xcolor}
\usepackage{xurl}
\definecolor{RKeyword}{RGB}{0, 0, 180}
\definecolor{RComment}{RGB}{0, 140, 0}
\definecolor{RString}{RGB}{180, 100, 0}
\lstdefinestyle{rlanguage}{
  language          = R,
  basicstyle        = \ttfamily\small,
  keywordstyle      = \color{RKeyword}\bfseries,
  commentstyle      = \color{RComment},
  stringstyle       = \color{RString},
  showstringspaces  = false,
  breaklines        = true,
  columns           = fullflexible,
  frame             = tb,
  framerule         = 0.4pt,
  rulecolor         = \color{black},
  captionpos        = b,
  aboveskip         = 1em,
  belowskip         = 1em,
  numbers           = none
}

\usetikzlibrary{arrows.meta}
\usetikzlibrary{positioning,calc,fit,arrows.meta,shapes.multipart,calc,backgrounds,fit}
\usepackage{tikzlings}
\usepackage{diagbox}
\usepackage[table]{xcolor}

\usepackage{geometry}
 \usepackage[hidelinks]{hyperref}
 
\usepackage{array}
\newcolumntype{L}[1]{>{\raggedright\arraybackslash}p{#1}}
\newcolumntype{C}[1]{>{\centering\arraybackslash}p{#1}}
\newcolumntype{R}[1]{>{\raggedleft\arraybackslash}p{#1}}

\newtheorem{definition}{Definition}

\newtheorem{proposition}{Proposition}

\newtheorem{assumption}{Assumption}
\newtheorem{corollary}{Corollary} 
\usepackage{fontspec}

\usepackage[utf8]{inputenc}
\usepackage{mathtools}
\usepackage{tcolorbox}
\usepackage{float}
\usepackage{tabularx}
\usepackage[normalem]{ulem}
\definecolor{arsenic}{rgb}{0.23, 0.27, 0.29}
\definecolor{burntorange}{rgb}{0.8784,0.8863,0.9059}
\definecolor{burntorange2}{rgb}{0.1725,0.2118,0.4157}
\usepackage[font=small,skip=0pt]{caption}

\definecolor{arsenic}{rgb}{0.23, 0.27, 0.29}
\definecolor{burntorange}{rgb}{0.8, 0.33, 0.0}
\definecolor{royalblue}{rgb}{0.15, 0.30, 0.60}

\definecolor{softgreen}{RGB}{110,170,120}
\definecolor{softorange}{RGB}{230,150,80}
\definecolor{softred}{RGB}{210,90,90}
\definecolor{softpurple}{RGB}{140,120,200}

\usepackage{colortbl}
\usepackage{xcolor}

\usepackage{titlesec}
\titleformat{\section}{\normalfont\large\bfseries}{\thesection}{1em}{}

\titleformat{\subsection}
  {\normalfont\normalsize\bfseries}{\thesubsection}{1em}{}

\usepackage{fancyhdr}
\usepackage{etoolbox}

\usepackage{pgfplots}
\pgfplotsset{compat=1.18}
\usetikzlibrary{positioning, calc, arrows.meta, fit}

\pgfplotsset{
  heatmap style/.style={
    width=4cm,
    height=4cm,
    view={0}{90},
    colormap={mycmap}{color=(white) color=(black)},
    colorbar=false,
    axis on top,
    axis line style={draw=black},
    xtick={-1, 0, 1},
    ytick={-1, 0, 1},
    tick label style={font=\scriptsize},
    tick style={color=black},
    scale only axis,
    enlargelimits=false,
    xmin=-1, xmax=1, ymin=-1, ymax=1,
    point meta min=0, point meta max=1,
    title style={font=\small, yshift=-2pt},
    xlabel={$x_1$},
    xlabel style={font=\small, yshift=2pt},
    ylabel style={font=\small, yshift=-2pt, xshift=0pt}
  },
  regression style/.style={
    width=4cm,
    height=4cm,
    axis lines=box,
    axis line style={draw=black},
    xtick={-2, 0, 2},
    ytick={-2, 0, 2},
    tick label style={font=\scriptsize},
    tick style={color=black},
    scale only axis,
    xmin=-2.5, xmax=2.5, ymin=-2.5, ymax=2.5,
    title style={font=\small, yshift=-2pt},
    xlabel={$x$},
    xlabel style={font=\small, yshift=2pt},
    ylabel style={font=\small, yshift=-2pt, xshift=6pt}
  }
}

\begin{document}

{\Large
\noindent The MCMC convergence graph: a diagnostic for multimodal posteriors
}

\begin{flushleft}
\bigskip 
Cici Chen Gu\textsuperscript{1 $\circledast$}, 
Eszter Lakatos\textsuperscript{2, 3}, 
Sara Hamis\textsuperscript{1 $\circledast$}.

\bigskip
\textbf{1} Department of Information Technology, Uppsala University, Uppsala, Sweden.\\ 
\textbf{2} Department of Mathematical Sciences, Chalmers University of Technology, Gothenburg, Sweden.
\\
\textbf{3} Department of Mathematical Sciences, University of Gothenburg, Gothenburg, Sweden.\\

\end{flushleft}

\smallskip
\noindent{\bf Author ORCiDs}: 
\smallskip
\textbf{CCG:} 0009-0008-1998-6486; 
\textbf{EL:} 0000-0002-7221-6850; 
\textbf{SH:} 0000-0002-1105-8078.

\bigskip
\noindent{\bf $\circledast$ Corresponding author}: chen.gu@it.uu.se, sara.hamis@it.uu.se.

\bigskip
\noindent{\bf Keywords}: Markov chain Monte Carlo; convergence diagnostics; multimodal posterior; parameter identifiability.

\thispagestyle{empty} 
\vspace{5cm}

\section*{Abstract}
Multimodality is common in many scientific and engineering applications. 
However, standard Markov chain Monte Carlo (MCMC) convergence diagnostics often penalise multimodality in practice by conflating overall poor mixing with groups of chains that have mixed well within distinct modes. 
To diagnose multimodality rather than flag it as a sampling failure, we introduce the MCMC convergence graph: 
a graph-based diagnostic that computes pairwise $\hat{R}$ values between chains and summarises them in a graph, revealing sets of chains that explore the same posterior mode. 
We implement the method in the probabilistic programming framework Stan and demonstrate its use on worked examples ranging from regression models with inherent multimodality in the data, to a pharmacokinetic model for which multimodality reflects non-identifiability. 
The graph-based diagnostic, which we make freely available in the R package \texttt{mcmcConvergenceGraph} and which accepts MCMC output
from any sampler, provides both graphical and numerical summaries of MCMC behaviour for diagnosing multimodality.

\newpage

\section{Introduction}
\label{sec:introduction}
Multimodality is prevalent in many scientific and engineering applications, including biology \cite{vergnon2012}, psychology \cite{haslbeck2023}, economics \cite{adrian2021}, and sociology \cite{downey2001}. 
Multimodality typically arises from the data itself or from non-identifiability inherent to the model.
When multimodal posteriors are targeted with multiple Markov chain Monte Carlo (MCMC) chains~\cite{gilks1995}, groups of chains that separate across distinct modes can produce chain disagreement indistinguishable from that of chains that simply fail to mix~\cite{yao2022}. 
Accordingly, multimodality is routinely flagged as a sampling pathology.

Existing approaches to addressing multimodality in MCMC fall into two broad directions: 
improving samplers to explore complex posterior geometries more efficiently, 
and improving convergence diagnostics that assess the resulting MCMC output. 
The first direction recognises that, 
even though asymptotic convergence to the target is guaranteed under regularity conditions~\cite{roberts1994},
in practice MCMC produces only a finite sample whose empirical distribution \emph{approximates} the target. 
For a multimodal target separated by low-density regions, transitions between modes are unlikely to be accepted, so a chain that starts near one mode tends to remain there~\cite{yao2025}, and multiple chains can
appear individually well-mixed while collectively missing or misweighting modes~\cite{yao2022}. 
To address this, MCMC methods with better transition dynamics have been developed, 
from sampler-level improvements such as Hamiltonian Monte Carlo~\cite{duane1987,neal2011} to multi-chain strategies such as parallel tempering~\cite{geyer1991}. 
However, these methods can miss modes when the posterior geometry is sufficiently challenging: 
even sophisticated samplers may leave well-separated modes unexplored~\cite{latuszynski2025,gallegos2026}.

The second direction develops convergence diagnostics for the MCMC output~\cite{roy2020}. 
The most widely used diagnostic is Gelman and Rubin's $\hat{R}$~\cite{gelman1992}, which compares within-chain and between-chain variability.
Its extensions include 
split-$\hat{R}$~\cite{gelman2013}, 
rank-normalised split-$\hat{R}$ and rank-normalised folded split-$\hat{R}$~\cite{vehtari2021}, 
and $\hat{R}_\infty$, which assesses convergence across the quantiles of the target
distribution~\cite{moins2025}.
Other diagnostics take different approaches. 
The effective sample size~\cite{vehtari2021} estimates the number of effectively independent draws across all chains. 
The classifier-based $R^*$~\cite{lambert2022} trains a classifier to predict which chain a draw came from, signalling convergence when the chains are indistinguishable. 
Multi-objective approaches summarise multiple sampling metrics to assess convergence~\cite{kavianihamedani2024}. 
Cluster-based methods apply clustering algorithms such as k-means to group chains by their sampled values, distinguishing modes in the target~\cite{zhu2021clustering_preprint}. 

Variants of $\hat{R}$ and the effective sample size are standard outputs of widely used probabilistic programming frameworks and samplers (e.g., Stan~\cite{stan2026}, PyMC~\cite{pymc2023}, OpenBUGS~\cite{sturtz2005}) and diagnostic libraries (e.g., ArviZ~\cite{martin2026}, coda~\cite{plummer2006}).
These standard outputs flag failures of convergence, but they are not designed to characterise the structure of that failure. 
A large $\hat{R}$, for example, indicates chain disagreement without describing how the disagreement is organised:
(i) chains may have converged locally to different modes of a multimodal posterior, 
(ii) most chains may explore the same region while a few remain isolated, 
or (iii) all chains may mix poorly on a challenging unimodal target. 
No single summary can distinguish these cases, or identify multimodality specifically, and an absence of evidence for non-convergence is not evidence of convergence~\cite{vats2021}.
To address this, numerical and visual diagnostics together can help reveal the structure of chain disagreement, as advocated by Vehtari et al.\ and their discussants~\cite{vehtari2021rejoinder,hans2021discussion}.

In this spirit, we introduce the MCMC convergence graph: a graph-based diagnostic that builds on pairwise $\hat{R}$ values and summarises chain agreement both numerically and visually. 
It distinguishes three interpretable patterns: 
all chains mixing together, 
chains partitioning into distinct posterior regions, 
and one or more chains being isolated from all others. 
We make the diagnostic accessible through the freely available R package \texttt{mcmcConvergenceGraph}. 
The remainder of this paper is organised as follows. 
Section~\ref{sec:theory} presents the theory of the diagnostic. 
Section~\ref{sec:implementation} describes the R package and the implementation of a selection of worked examples, whose results are shown in Section~\ref{sec:results}. 
Section~\ref{sec:discussion} discusses practical considerations and possible extensions, and Section~\ref{sec:conclusion} concludes our work.

\section{Theory}
\label{sec:theory}
We develop a graph-based diagnostic for MCMC convergence and multimodality from pairwise $\hat{R}$ values between chains.
The concept is illustrated in Fig.~\ref{fig:pipeline}. 
We first present the theory for one-dimensional posteriors, or targets more generally, in Sections~\ref{sec:pairwise-rhat}--\ref{sec:chaincountmode}, and then extend it to arbitrary dimension in Section~\ref{sec:multidim}.

\subsection{Pairwise $\hat{R}$}
\label{sec:pairwise-rhat}
Let $\pi$ be a target distribution and let $C = \{1, 2, \ldots, n\}$ index $n$ MCMC chains targeting $\pi$. 
Each chain $i \in C$ produces $N$ samples $\{\theta_i^{(t)}\}_{t=1}^N$ drawn from a distribution $\pi_i = \pi|_{A_i}$, where $A_i \subseteq \mathrm{supp}(\pi)$ is the explored region of chain $i$. 
Throughout, we use the term \emph{mode} to denote a well-separated region of $\pi$ that a chain does not leave within a feasible run, which may itself contain more than one local maximum.
If a chain can move between two local maxima within a feasible run, they belong to the same mode. 
Thus, for a chain $i$ trapped in a single mode of $\pi$, $\pi_i$ is $\pi$ restricted to that mode. 
In theory, an ergodic chain $i$ explores all of $\pi$ as $N \to \infty$, so that $A_i = \mathrm{supp}(\pi)$ and $\pi_i = \pi$.
However, in practice, the expected time to move between modes far exceeds any feasible run length.
We therefore treat each chain $i$ as confined to a single mode, effectively sampling its restricted target $\pi_i = \pi|_{A_i}$, and study the limit $N \to \infty$ in which it converges to $\pi_i$ rather than to $\pi$.
\\

\noindent For each pair of chains $\{i,j\}$, let $\hat{R}_{ij}$ be an $\hat{R}$ statistic computed from chains $i$ and $j$ alone. 
As $N \to \infty$,
\begin{equation}
  \hat{R}_{ij} \in [1, \infty),
\end{equation}
with values close to 1 indicating agreement between the two chains.
This statistic may be the pairwise version of the split-$\hat{R}$~\cite{gelman2013}, 
the rank-normalised $\hat{R}$ reported by Stan-$\hat{R}$~\cite{vehtari2021}, 
or any other $\hat{R}$ variant satisfying properties (P1)--(P3):
\begin{enumerate}[label=(P\arabic*),leftmargin=*,labelindent=\parindent,itemsep=2pt]
  \item \emph{Symmetry.} $\hat{R}_{ij} = \hat{R}_{ji}$.
  \item \emph{Agreement.} $\hat{R}_{ij} \xrightarrow{p} 1$ as $N \to \infty$ if $\pi_i = \pi_j$.
  \item \emph{Disagreement.} $\hat{R}_{ij} \xrightarrow{p} \lambda_{ij} > 1$ as $N \to \infty$ if $\pi_i \neq \pi_j$.
\end{enumerate}
Here, $\lambda_{ij}$ is a pair-specific limiting value. 
Properties~(P1)--(P3) follow the standard consistency behaviour of $\hat{R}$~\cite{gelman1992, vehtari2021}. 
By the symmetry of (P1), we consider only pairs with $i < j$ and exclude self-pairs, giving $\binom{n}{2}$ pairwise statistics. 

\vspace{.1cm}

\begin{figure}[H]
\centering
\includegraphics[
  width=\textwidth
]{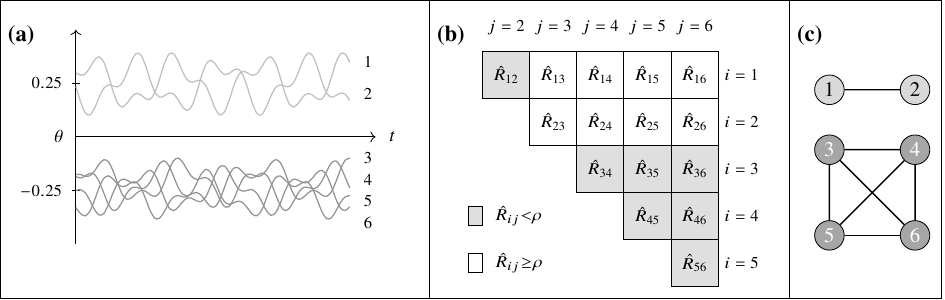}
\caption{
\textbf{Pairwise $\hat{R}$ values induce a graph-based diagnostic for MCMC convergence and multimodality.}
(a) Six MCMC chains after warmup over sample iteration $t$.
Chains 1, 2 explore one posterior region and chains 3--6 explore another. 
(b) Pairwise statistics $\hat{R}_{ij}$ for $i < j$.
Shaded cells indicate $\hat{R}_{ij} < \rho$, 
where $\rho > 1$ is the convergence threshold.
(c) The MCMC convergence graph $G_\rho$ takes the chains as nodes
and adds an edge $\{i,j\}$ whenever $\hat{R}_{ij} < \rho$. 
Its connected components group the chains by the region of the posterior they explore.
}
\label{fig:pipeline}
\end{figure}

\subsection{Constructing the MCMC convergence graph}
\label{sec:graph-const}
We define the MCMC convergence graph $G_\rho = (C, E_\rho)$ as the undirected graph in which MCMC chains are nodes and two nodes $i,j$ are connected by an edge if and only if $\hat{R}_{ij} < \rho$. 
The edge set is
\begin{equation}
  E_\rho = \{\{i,j\} \subseteq C : i < j, \; \hat{R}_{ij} < \rho\}.
\end{equation}
Two chains belong to the same connected component of $G_\rho$ if and only if they are joined by a path in $G_\rho$. 
Let $\mathcal{Q} = \{Q_1, Q_2, \ldots, Q_K\}$ denote the connected components that contain at least two chains, with $K = |\mathcal{Q}|$. 
Separately, let $\mathcal{I} \subseteq C$ denote the isolated chains, each disagreeing with every other chain and forming a singleton component, with $I = |\mathcal{I}|$. 
Hence $K$ counts the groups of mutually agreeing chains and $I$ the chains that agree with none, so $K = 0$ when every chain is isolated. 
By (P2)--(P3), the value $1$ separates agreement between chains from disagreement in the limit $N \to \infty$.
In practical applications, $N$ is finite and $\hat{R}_{ij}$ can be above $1$ 
even for well-mixed chain pairs. 
We therefore fix $\rho$ slightly above 1, adopting standard $\hat{R}$ thresholds: 
the classical $\rho = 1.1$~\cite{gelman1992} and the stricter $\rho = 1.05$ or $\rho = 1.01$~\cite{vehtari2021}.

\subsection{Reading the MCMC convergence graph}
\label{sec:graph-read}
The pair $(K, I)$ summarises the qualitative structure of $G_\rho$ and admits three regimes (R1)--(R3):
\begin{enumerate}[label=(R\arabic*),leftmargin=*,labelindent=\parindent,itemsep=4pt]
\item \emph{Global agreement} ($K = 1$, $I = 0$): 
all chains form a single connected component, 
consistent with a single explored mode or with chains that have fully explored a multimodal posterior.
\item \emph{Local agreement} ($K \geq 2$, $I = 0$): 
the chains partition into connected components of size $\geq 2$, 
connected within each component and with no agreement across components. 
Each component corresponds to a distinct mode explored by multiple chains.
\item \emph{Isolated chains} ($I \geq 1$): 
at least one chain is in disagreement with every other chain so that the graph contains isolated nodes, 
consistent with either a mode explored by a single chain or a chain that has failed to mix.
\end{enumerate}
\noindent The ratio $I/n$ quantifies the proportion of isolated chains.
A value of $I/n = 0$ indicates that every chain is in agreement with at least one other, 
whereas values close to 1 indicate widespread isolation, consistent with poor mixing.

\subsection{Asymptotic mode identification}
\label{sec:asymptotic}
We show two results.
First, as the per-chain sample size $N \to \infty$, the connected components of $G_\rho$ partition chains by their explored mode. 
Second, when no chain is isolated, $K$ identifies the number of distinct explored modes.
Both rely on the following assumption on the mode structure of $\pi$. 
\begin{assumption}[Modes and explored regions]
\label{assumption:modes}
The target $\pi$ has $M$ disjoint modes, and each chain's explored region $A_i$ is one of these modes.
\end{assumption}

\begin{proposition}[Asymptotic mode identification]
\label{prop:asymptotic}
Under Assumption~\ref{assumption:modes}, there exists $\bar{\rho} > 1$ such that for any threshold $\rho \in (1, \bar{\rho})$, as $N \to \infty$, with probability tending to 1:
\begin{enumerate}[label=(\roman*)]
  \item chains $i, j$ share an edge in $G_\rho$ if and only if their explored regions satisfy $A_i = A_j$;
  \item the connected components of $G_\rho$ partition the chains by their explored mode.
\end{enumerate}
\end{proposition}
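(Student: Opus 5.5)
The argument combines properties (P2)--(P3) with a finite union bound over pairs. Under Assumption~\ref{assumption:modes}, each explored region $A_i$ is one of the $M$ disjoint modes. So $\pi_i = \pi|_{A_i}$ and $\pi_j = \pi|_{A_j}$ coincide when $A_i = A_j$. They differ when $A_i \neq A_j$, because the restrictions then have disjoint supports.

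\textbf{Choice of threshold.} Let $D = \{\{i,j\} : i<j,\ A_i \neq A_j\}$. For each pair in $D$, (P3) gives a limit $\lambda_{ij} > 1$. Define
\begin{equation}
  \bar{\rho} = \min_{\{i,j\} \in D} \lambda_{ij},
\end{equation}
and set $\bar{\rho} = \infty$ (or any value above 1) if $D$ is empty. Since there are only finitely many pairs, at most $\binom{n}{2}$, this minimum is attained and satisfies $\bar{\rho} > 1$. Now fix $\rho \in (1, \bar{\rho})$.

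\textbf{Part (i).} Treat the two kinds of pair separately.
\begin{itemize}
  \item If $A_i = A_j$, then by (P2) $\hat{R}_{ij} \xrightarrow{p} 1 < \rho$. Hence $P(\hat{R}_{ij} < \rho) \to 1$, taking $\epsilon = \rho - 1$ in the definition of convergence in probability.
  \item If $A_i \neq A_j$, then by (P3) $\hat{R}_{ij} \xrightarrow{p} \lambda_{ij} \geq \bar{\rho} > \rho$. Hence $P(\hat{R}_{ij} \geq \rho) \to 1$, taking $\epsilon = \lambda_{ij} - \rho > 0$.
\end{itemize}
Let $F_{ij}$ be the event that pair $\{i,j\}$ is misclassified. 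Each $P(F_{ij}) \to 0$. A union bound over the finitely many pairs gives $P(\bigcup F_{ij}) \le \sum P(F_{ij}) \to 0$. So with probability tending to 1, $E_\rho = \{\{i,j\} : A_i = A_j\}$ exactly, which is (i).

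\textbf{Part (ii).} On the same high-probability event, $G_\rho$ is the graph of the equivalence relation $A_i = A_j$. This graph is a disjoint union of cliques, one for each explored mode, with no edges between them. Its connected components are therefore exactly the equivalence classes $\{i : A_i = A_m\}$, which partition the chains by explored mode.

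A chain that is alone in its mode appears as an isolated node, so it lies in $\mathcal{I}$ rather than in $\mathcal{Q}$. This is consistent with regime (R3) and leaves the partition statement intact.

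\textbf{Main obstacle.} The delicate step is choosing $\bar{\rho}$ uniformly, before taking the limit in $N$. This works only because $n$ is fixed and the $\lambda_{ij}$ are deterministic limits. If the number of chains grew with $N$, or if the $\lambda_{ij}$ could approach 1, the minimum could collapse to 1 and the union bound would fail.

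The other point to check is that $A_i \neq A_j$ really implies $\pi_i \neq \pi_j$, so that (P3) applies. This holds because disjoint modes of positive mass give restrictions with disjoint supports.
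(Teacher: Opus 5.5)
Your proof is correct and follows the paper's argument: you take the same threshold $\bar{\rho} = \min_{A_i \neq A_j} \lambda_{ij}$ (with $\bar{\rho} = \infty$ when no cross-mode pair exists), use (P2)--(P3) to put each pair on the correct side of $\rho$, and deduce (ii) from (i). Your explicit union bound over the $\binom{n}{2}$ pairs adds something the paper leaves implicit: its proof gives correct classification with probability tending to 1 pair by pair, whereas the connected-component claim in (ii) needs all pairs to be correct simultaneously.
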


\begin{proof}
By (P2)--(P3), 
as $N \to \infty$, 
$\hat{R}_{ij} \xrightarrow{p} 1$ when $\pi_i = \pi_j$, 
and $\hat{R}_{ij} \xrightarrow{p} \lambda_{ij} > 1$ when $\pi_i \neq \pi_j$. 
By Assumption~\ref{assumption:modes}, $\pi_i = \pi_j$ if and only if $A_i = A_j$. 
Let $\bar{\rho} = \min_{A_i \neq A_j} \lambda_{ij}$ be the smallest limiting value over pairs of chains in distinct modes, with $\bar{\rho} = \infty$ if no such pair exists. 
Fix a threshold $\rho \in (1, \bar{\rho})$, 
which lies above the same-mode limit and below every distinct-mode limit.
Since an edge is present if and only if $\hat{R}_{ij} < \rho$, 
the edge $\{i, j\}$ appears if and only if $A_i = A_j$, with probability tending to 1 as $N \to \infty$.
This proves \textit{(i)}.
Part \textit{(ii)} follows from \textit{(i)}:
the connected components partition the chains by their explored region, which by Assumption~\ref{assumption:modes} is a mode.
\end{proof}

\noindent To proceed to mode counting, we introduce the condition that no chain is isolated.

\begin{definition}[Isolation-free graph]
\label{def:isolation-free}
The graph $G_\rho$ is \emph{isolation-free} if $\mathcal{I} = \emptyset$ (equivalently, $I = 0$).
\end{definition}

\begin{corollary}[Mode counting for an isolation-free graph]
\label{cor:mode-count}
Under Assumption~\ref{assumption:modes}, if $G_\rho$ is isolation-free, then $K$ equals the number of distinct modes explored by the chains, each confirmed by at least two chains.
\end{corollary}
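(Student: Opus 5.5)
The plan is to derive the corollary directly from Proposition~\ref{prop:asymptotic}. It inherits that proposition's regime: $\rho \in (1,\bar{\rho})$ and $N \to \infty$, with statements holding with probability tending to 1. We work on the event, of probability tending to 1, on which part~(i) holds simultaneously for all $\binom{n}{2}$ pairs. This is legitimate because there are finitely many pairs, so a union bound over the pairwise failure probabilities, each tending to 0, keeps the total failure probability tending to 0.

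On this event, part~(i) says the edge relation on $C$ is exactly the equivalence relation ``$A_i = A_j$''. We define the map $\phi$ sending each chain $i$ to its explored region $A_i$; by Assumption~\ref{assumption:modes}, $A_i$ is one of the $M$ modes. Two observations then follow.

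\textbf{Components are cliques of one mode.} Each connected component of $G_\rho$ is precisely a fibre $\phi^{-1}(m)$ for some explored mode $m$. Indeed, chains in the same fibre are pairwise adjacent, so the fibre is a clique and hence connected. Conversely, an edge never joins different fibres, so no path can leave a fibre. This gives a bijection between components of $G_\rho$ and the set $\phi(C)$ of distinct explored modes, which is exactly part~(ii) of Proposition~\ref{prop:asymptotic}.

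\textbf{Isolation-freeness gives the count.} A component is a singleton exactly when its fibre contains one chain, i.e.\ when that mode is explored by a single chain. Under isolation-freeness, $\mathcal{I} = \emptyset$, so every component has size at least two and therefore lies in $\mathcal{Q}$. Hence $K = |\mathcal{Q}|$ equals the number of components, which equals $|\phi(C)|$, the number of distinct explored modes. Each such mode is confirmed by at least two chains, because its fibre has size at least two.

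The main obstacle is interpretive rather than technical: isolation-freeness is a property of the random graph $G_\rho$, so the conclusion should be read as holding on the high-probability event of Proposition~\ref{prop:asymptotic}. I will state explicitly that the corollary asserts, with probability tending to 1, that on $\{I=0\}$ the equality holds. Equivalently, when every explored mode is visited by at least two chains, the event $\{I=0\}$ itself has probability tending to 1, and $K$ equals the number of explored modes. I will also note that $K$ counts only explored modes and may be smaller than $M$.
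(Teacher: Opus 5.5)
Your proposal is correct and follows the same route as the paper: Proposition~\ref{prop:asymptotic} shows that the connected components partition the chains by explored mode, and isolation-freeness puts every component in $\mathcal{Q}$, so $K=|\mathcal{Q}|$ counts the distinct explored modes. Your union bound over the $\binom{n}{2}$ pairs, the fibre/clique argument, and the explicit check that each mode is confirmed by at least two chains spell out steps that the paper's two-line proof leaves implicit.
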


\begin{proof}
By Proposition~\ref{prop:asymptotic}, the connected components partition the chains by mode. When the graph is isolation-free there are no isolated chains, so every component lies in $\mathcal{Q}$, and $K = |\mathcal{Q}|$ equals the number of distinct explored modes.
\end{proof}

\subsection{Chain count and mode identification}
\label{sec:chaincountmode}
In practice the per-chain sample size $N$ is finite, and the resulting MCMC convergence graph $G_\rho$ depends  on the number of chains $n$. 
Proposition~\ref{prop:lower-bound} gives a necessary lower bound on $n$.

\begin{proposition}[Lower bound on chain count]
\label{prop:lower-bound}
Under Assumption~\ref{assumption:modes}, 
every mode is the explored region of at least two chains if and only if $K = M$ and $I = 0$, 
which requires the lower bound $n \geq 2M$.
\end{proposition}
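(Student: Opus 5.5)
The plan is to work in the asymptotic regime of Proposition~\ref{prop:asymptotic}. Fix $\rho \in (1,\bar{\rho})$ and work on the event, whose probability tends to $1$, on which the connected components of $G_\rho$ are exactly the classes of the equivalence relation $A_i = A_j$. For each mode $m \in \{1,\dots,M\}$ let $c_m = |\{i \in C : A_i = \text{mode } m\}|$ be the number of chains whose explored region is mode $m$. By Assumption~\ref{assumption:modes} every chain lies in exactly one mode, so $\sum_{m=1}^M c_m = n$. By Proposition~\ref{prop:asymptotic}(i), chains in the same mode are pairwise adjacent, so each mode with $c_m \ge 1$ gives one component that is a clique of size $c_m$. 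Chains in distinct modes are never adjacent. Consequently
\[
K = |\{m : c_m \ge 2\}|, \qquad I = |\{m : c_m = 1\}|,
\]
and modes with $c_m = 0$ contribute to neither count. The whole proposition reduces to these two counting identities.

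\textbf{Forward direction.} Suppose every mode is the explored region of at least two chains, i.e.\ $c_m \ge 2$ for all $m$. Then $\{m : c_m = 1\} = \emptyset$, so $I = 0$, and $\{m : c_m \ge 2\}$ is all $M$ modes, so $K = M$.

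\textbf{Reverse direction.} Suppose $K = M$ and $I = 0$. The sets $\{m : c_m \ge 2\}$ and $\{m : c_m = 1\}$ are disjoint subsets of $\{1,\dots,M\}$. Since the first already has $M$ elements, every mode satisfies $c_m \ge 2$. The condition $I=0$ is then automatic, but it is harmless to include it. Note that $K = M$ by itself rules out both unexplored modes and singly explored modes, which is the content of ``every mode confirmed by two chains.''

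\textbf{Lower bound.} Summing $c_m \ge 2$ over all modes gives $n = \sum_m c_m \ge 2M$. This is only a necessary condition: with $n \ge 2M$ the chains could still all fall into one mode. I would add a remark to that effect.

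The main obstacle is the probabilistic status of the equivalence. $K$ and $I$ are random, being functions of the finite-$N$ graph, whereas ``every mode is explored by at least two chains'' is a deterministic statement about the $A_i$. I would handle this by stating the equivalence on the high-probability event of Proposition~\ref{prop:asymptotic}, so that it holds with probability tending to $1$ as $N\to\infty$ for $\rho\in(1,\bar\rho)$. The bound $n \ge 2M$, by contrast, follows deterministically from the condition on the $A_i$ alone, with no probabilistic qualification needed.
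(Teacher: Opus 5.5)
Your proposal is correct and follows the paper's route. The paper likewise deduces the equivalence directly from the partition of chains by mode given by Proposition~\ref{prop:asymptotic} and obtains $n \geq 2M$ by pigeonhole; your identities $K = |\{m : c_m \ge 2\}|$ and $I = |\{m : c_m = 1\}|$ simply spell out that one-line argument, and your points that the equivalence holds only on the high-probability event of Proposition~\ref{prop:asymptotic} and that $K = M$ already forces $I = 0$ are correct refinements the paper leaves implicit.
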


\noindent This follows from Proposition~\ref{prop:asymptotic}: the connected components partition the chains by mode, so $K = M$ with $I = 0$ holds exactly when every mode is explored by at least two chains. 
The bound $n \geq 2M$ is then immediate from the pigeonhole principle, and is necessary but not sufficient.
To quantify this, we consider $n$ chains, each independently assigned to one of the $M$ modes with probability $1/M$. A given chain is isolated when all $n - 1$ other chains miss its mode, which occurs with probability $(1 - 1/M)^{n-1}$. 
By linearity of expectation:
\begin{equation}
  \mathbb{E}[I] \;=\; n\left(1 - \tfrac{1}{M}\right)^{n-1}.
  \label{eq:chain_lower_theoretical}
\end{equation}
Reliable mode identification therefore requires $n \gg M$, even in this idealised regime. 
However, in practice each chain carries a non-zero risk $\xi$ of being in disagreement with every other chain, where $\xi$ depends on the posterior geometry and the sampler.
With $n$ independent chains, the probability that at least one chain is isolated is
\begin{equation}
  \mathbb{P}(I \geq 1) \;=\; 1 - (1 - \xi)^n,
  \label{eq:chain_upper_practical}
\end{equation}
which approaches $1$ as $n$ grows. 
Eqs.~\eqref{eq:chain_lower_theoretical} and~\eqref{eq:chain_upper_practical} 
demonstrate a trade-off in choosing $n$ for constructing the MCMC convergence graph.  
Too few chains fail to recover every mode, 
while too many increase the risk of isolated chains due to practical limitations of the sampler: 
\begin{equation}
  \underbrace{\mathbb{E}[I] \xrightarrow{n \to \infty} 0}_{\text{without per-chain risk}},
  \qquad
  \underbrace{\mathbb{P}(I \geq 1) \xrightarrow{n \to \infty} 1}_{\text{with per-chain risk } \xi}.
\end{equation}

\vspace{-.5cm}
\enlargethispage{\baselineskip}
\subsection{Multivariate MCMC convergence graphs}
\label{sec:multidim}
For a multivariate target $\pi$ with dimension $d>1$,  
$d$ pairwise $\hat{R}$ statistics $\hat{R}_{ij,s}$ ($s=1,\ldots,d$) are computed for each chain pair, 
with corresponding per-dimension convergence graphs $G_{\rho,s}$.
One can aggregate these into a single graph by intersection or union of edge sets:
\begin{align}
  G_\cap &= \textstyle\bigcap_{s=1}^d G_{\rho,s}, \tag*{(Intersection)}
    \label{eq:intersection} \\
  G_\cup &= \textstyle\bigcup_{s=1}^d G_{\rho,s}. \tag*{(Union)}
    \label{eq:union}
\end{align}
Intersection requires chain-pairs to agree on every marginal (or dimension),  whereas union requires agreement on at least one.
The set difference $G_\cup \setminus G_\cap$ thus reveals chain pairs that match on some marginals but not others. 
As illustrated in Fig.~\ref{fig:multivariate}, $G_\cap$, $G_\cup$, and their difference $G_\cup \setminus G_\cap$ provide complementary views for diagnosing multimodality. 
The mode identification and chain-count results of Sections~\ref{sec:asymptotic} and~\ref{sec:chaincountmode} 
apply directly to each per-dimension graph $G_{\rho,s}$, 
and extend to the aggregated multivariate setting by replacing $G_\rho$ with $G_\cap$.

\begin{figure}[H]
\centering
\includegraphics[
  width=\textwidth,
]{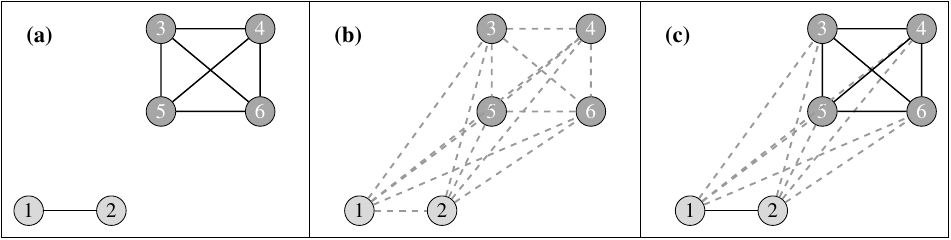}
\caption{
\textbf{Multivariate MCMC convergence graphs constructed from univariate graphs.}
(a) For dimension $s=1$, $G_{\rho,1}$ shows local agreement: 
chains in two connected components ($\{1,2\}$ and $\{3,4,5,6\}$) are in within-component agreement and across-component disagreement.
(b) For dimension $s=2$, $G_{\rho,2}$ shows global agreement: 
all chains form a single connected component.
(c) A multivariate convergence graph is constructed from $G_{\rho,1}$ and $G_{\rho,2}$. 
Solid edges define the intersection $G_\cap$, and dashed edges are the difference $G_\cup \setminus G_\cap$.
}
\label{fig:multivariate}
\end{figure}

\section{Implementation}
\label{sec:implementation}

\subsection{The R package \texttt{mcmcConvergenceGraph}}
\label{sec:workflow}
We implement the MCMC convergence graph diagnostic as an open-source R package~\cite{r2025}, freely available at \url{https://github.com/chain-diagnostics/mcmc-convergence-graph}. 
The package is installed from GitHub via \texttt{devtools::install\_github("chain-diagnostics/mcmc-convergence- graph")}.
It accepts posterior samples from any MCMC sampler and returns both numerical summaries of local chain agreement and a visual representation of the convergence graph. 
Numerical summaries are based on pairwise $\hat{R}$ values calculated as is done in Stan~\cite{vehtari2021}, and the graph is rendered with the \texttt{igraph}~\cite{igraph2026} package. 
The underlying pipeline is outlined in Appendix A.  
\subsection{Using the R package}
\label{sec:using-the-r-package}
The wrapper function \texttt{mcmcgraph()} takes posterior samples as an \texttt{rstan} \texttt{stanfit} object~\cite{rstan2026} or, more generally, as a three-dimensional draws array of shape $N \times n \times d$, where $N$ is the number of post-warmup iterations per chain, $n$ the number of chains, and $d$ the number of monitored parameters. 
Draws are the only required argument. 
Optional arguments include parameters to diagnose, 
the threshold $\rho$ (default 1.05), 
and graph style options. 
A full list of optional input arguments is available on the package's GitHub page. 
The package provides three types of graph outputs: 
per-parameter graphs $G_{\rho, s}$ (one graph per parameter $s$), 
the intersection graph $G_\cap$ (chains agree on every parameter),
and the union graph $G_\cup$ (chains agree on at least one parameter). 
A minimal use case is shown in Code Listing~\ref{lst:mcmcgraph}, 
and package outputs are shown in Section~\ref{sec:results} for the example problems introduced in Section~\ref{sec:experimental_setup}. 
\begin{lstlisting}[style=rlanguage,
caption={Minimal call to run and inspect the MCMC convergence graph diagnostic.},
label=lst:mcmcgraph]
library(mcmcConvergenceGraph)
result <- mcmcgraph(draws)   # stanfit object or a three-dimensional draws array
print(result)
\end{lstlisting}
\vspace{-.5cm}

\subsection{Experimental setup}
\label{sec:experimental_setup}
We apply the MCMC convergence graph diagnostic to three experimental settings of increasing complexity (Fig.~\ref{fig:datagen}). 
In the first setting, chains sample directly from prescribed two-dimensional target distributions with known mode structure. 
In the second, data are generated from two linear regressions with Gaussian noise, and fitted with a single Cauchy regression.
In the third, we fit clinical data to a pharmacokinetic (PK) model. 
All inference is performed in Stan with eight chains, each run for 1000 warmup and 1000 sampling iterations.   Chains are initialised uniformly on the target domain via a custom function to avoid favouring any mode by initialisation. 
We use the threshold $\rho = 1.05$ to construct the graphs. 

\paragraph{(i) Controlled target distributions.}
In a standard Bayesian workflow, the target posterior takes the form
$\pi(\mathbf{x}) \propto p(\mathbf{y} \mid \mathbf{x})\, p(\mathbf{x})$, 
with a likelihood $p(\mathbf{y} \mid \mathbf{x})$ and a prior $p(\mathbf{x})$. 
In this first setting, we bypass this construction and specify the target density directly, providing a controlled test case. 
We consider four target distributions on $[-1, 1]^2$:
\begin{align*}
  \text{Uniform:}              &\quad \pi(\mathbf{x}) \propto 1, \\
  \text{Unimodal Gaussian:}    &\quad \pi = \mathcal{N}(\mathbf{0},\, 0.15\, \mathbf{I}), \\
  \text{Anisotropic Gaussian:} &\quad \pi = \mathcal{N}(\mathbf{0},\, \Sigma),
    \quad \Sigma = \begin{pmatrix} 0.15 & 0.14 \\ 0.14 & 0.15 \end{pmatrix}, \\
  \text{Bimodal Gaussian:}  &\quad \pi = \tfrac{1}{2} \sum_{k=1}^{2} \mathcal{N}(\boldsymbol{\mu}_k,\, 0.008\, \mathbf{I}),
\end{align*}

\noindent with $\boldsymbol{\mu}_1 = (-0.5, 0)$ and $\boldsymbol{\mu}_2 = (0.5, 0)$, as shown in Fig.~\ref{fig:datagen}a. 
Chains are initialised uniformly on $[-1, 1]^2$. 
The diagnostic is applied to each coordinate $x_1, x_2$ separately and combined via the intersection graph $G_\cap$. 

\paragraph{(ii) Cauchy regression on bimodal data.}
We generate $L=300$ observations from two linear regressions with Gaussian noise, 
\begin{equation}
  y_\ell = a_k x_\ell + b_k + \varepsilon_\ell,
  \quad \varepsilon_\ell \sim \mathcal{N}(0, \sigma^2),
\end{equation}
where $k = 1$ for $\ell \leq L/2$ and $k = 2$ otherwise, with $\sigma = 0.05$. 
Data generation is performed in three variations:
\emph{different intercepts} ($a_1 = a_2 = 0$, $b_1 = 2$, $b_2 = -2$),
\emph{different slopes} ($a_1 = 1$, $a_2 = -1$, $b_1 = b_2 = 0$), and
\emph{different intercepts and slopes} ($a_1 = 0.5$, $a_2 = -0.5$, $b_1 = 0.25$, $b_2 = -0.25$),
as shown in Fig.~\ref{fig:datagen}b. 
We fit a single Cauchy regression to each dataset,
\begin{equation}
  y_\ell \sim \mathrm{Cauchy}(a x_\ell + b, \gamma),
\end{equation}
where the scale $\gamma$ is inferred jointly with $(a, b)$. 
This heavy-tailed likelihood allows the bimodal structure of the data to emerge as multimodality in the posterior. 
Chains are initialised uniformly on $[-3, 3]$ for $a$ and $b$, and on $[0, 1]$ for $\gamma$. 
Priors are $a \sim \mathcal{N}(0, 1)$, $b \sim \mathcal{N}(0, 1)$, and $\gamma \sim \mathcal{N}^+(0, 0.2^2)$. 

\begin{figure}[H]
\centering
\includegraphics[width=\textwidth]{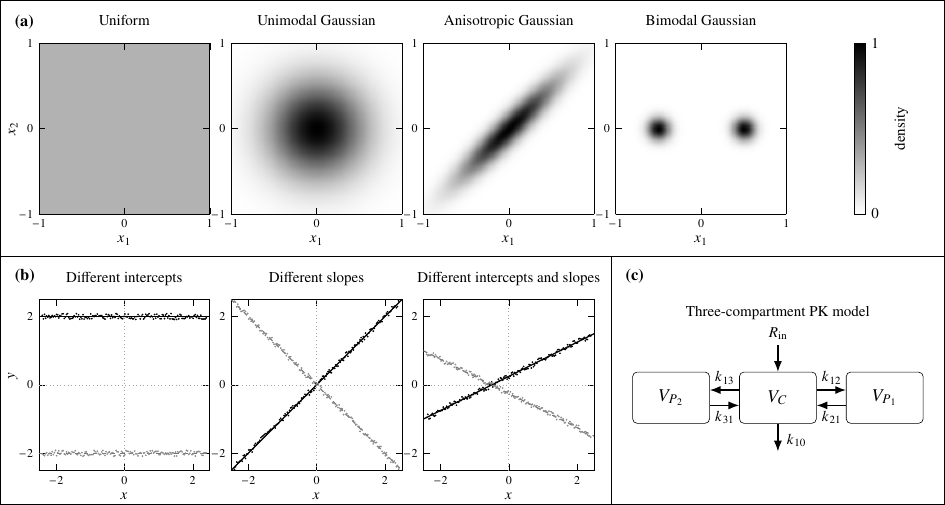}
\caption{
\textbf{Experimental setup of increasing complexity.}
(a) Chains sample directly from one of the four two-dimensional target distributions shown. 
(b) Data drawn from two linear regressions with added Gaussian noise, to be fitted with a single Cauchy regression. 
True lines and $L=300$ illustrative data points per plot are shown. 
(c) Clinical data are fitted to the three-compartment pharmacokinetic (PK) model, shown diagrammatically.
}
\label{fig:datagen}
\end{figure}

\paragraph{(iii) Pharmacokinetic model on clinical data.}
We consider the three-compartment pharmacokinetic (PK) model, commonly used in clinical pharmacology to describe drug distribution and elimination \cite{Minto1997}. 
The model tracks the drug amount in a central compartment $A_C(t)$ and two peripheral compartments $A_1(t), A_2(t)$ over time $t$. 
The dynamics are governed by a system of ordinary differential equations (ODEs),
\begin{equation}
\begin{aligned}
  \frac{dA_C(t)}{dt} &= R_{\text{in}}(t) - (k_{10} + k_{12} + k_{13})\, A_C(t) + k_{21}\, A_1(t) + k_{31}\, A_2(t), \\
  \frac{dA_1(t)}{dt} &= k_{12}\, A_C(t) - k_{21}\, A_1(t), \\
  \frac{dA_2(t)}{dt} &= k_{13}\, A_C(t) - k_{31}\, A_2(t),
\end{aligned}
\label{eq:ode}
\end{equation}

\noindent with initial conditions $A_C(0)=A_1(0)=A_2(0)=0$, assuming no drug is present before dosing begins. 
In Eq.~\eqref{eq:ode}, $R_{\text{in}}(t)$ is the known input rate, $k_{10}$ the elimination rate from the central compartment, and $(k_{12}, k_{21})$, $(k_{13}, k_{31})$ the exchange rates between the central and each peripheral compartment, as shown in Fig.~\ref{fig:datagen}c. 
The observation error follows a combined proportional--additive model with standard-deviation components 
$\sigma_{\text{add}}$ and $\sigma_{\text{prop}}$,
\begin{equation}
  y_{\text{obs}} \sim \mathcal{N}\!\left(y_{\text{pred}},\ (\sigma_{\text{add}} + \sigma_{\text{prop}}\, y_{\text{pred}})^2\right)
  \label{eq:ode_obs}
\end{equation}
where $y_{\text{pred}} = A_C(t) / V_C$ is the drug concentration in the central compartment and $V_C$ is the corresponding volume (Fig.~\ref{fig:datagen}c). 
Analogous relations $A_1(t)/V_{P_1}$ and $A_2(t)/V_{P_2}$ define the (unobserved) peripheral concentrations. 
The ODE model (Eq.~\eqref{eq:ode}) has a known non-identifiability: 
switching $(k_{12}, k_{21}) \leftrightarrow (k_{13}, k_{31})$ leaves the predicted concentrations invariant. 
This symmetry is a form of label switching~\cite{stephens2000}, giving rise to two modes in the posterior.

We fit the full model (Eqs.~\eqref{eq:ode} and~\eqref{eq:ode_obs}) 
to remifentanil concentration data from a previous study of 65 participants~\cite{Minto1997},
with per-participant infusion durations of 4--20 minutes and sampling windows of 44--230 minutes. 
Concentrations are provided in 
$\mathrm{ng}\,\mathrm{mL}^{-1}$, 
rate constants in $\mathrm{min}^{-1}$, 
and volumes in litres (L). 
We implement the ODE model in Stan with complete pooling (following a recent guide to ODE modelling in Stan~\cite{hamis2026ode}), 
inferring a shared parameter set across participants, aiming to demonstrate the diagnostic on a real model with known non-identifiability rather than to draw pharmacological conclusions. 
Priors are based on the study's frequentist results:
$k_{10}, k_{12}, k_{21}, k_{13}, k_{31} \sim \mathrm{LogNormal}(\log(0.3), 1)$
and $V_C \sim \mathrm{LogNormal}(\log(5), 1)$. 
For the errors we use the priors $\sigma_{\text{add}}, \sigma_{\text{prop}} \sim \mathcal{N}^+(0, 1)$.
Chains are initialised uniformly on 
$[0, 2]$ for $k_{10}, k_{12}, k_{21}, k_{13}, k_{31}$, 
on $[0, 10]$ for $V_C$, 
and on $[0, 1]$ for $\sigma_{\text{add}}, \sigma_{\text{prop}}$.

\vspace{1cm}

\section{Results}
\label{sec:results}

\subsection{Controlled target distributions}
\label{sec:results_i}
Numerical summaries from sampling the four controlled target distributions are reported in Table~\ref{tab:res_i}. 
For the uniform and two unimodal Gaussian targets, the Stan-reported $\hat{R}$ and the graph diagnostic agree: 
the former reports $\hat{R} \approx 1$, the latter reports a single connected component ($K = 1$) with no isolated chains ($I = 0$), and every pairwise value stays below the threshold ($\max_Q \hat{R}_{ij} < \rho = 1.05$). 
For the bimodal target, the Stan-reported $\hat{R}$ is large for $x_1$ ($\hat{R} \approx 5.7$), indicating chain disagreement, while $\hat{R} \approx 1$ for $x_2$, the coordinate on which the two modes share the same mean. 
The graph diagnostic resolves this disagreement by characterising connected components: two components of sizes $(5,3)$ for $x_1$, and a single component for $x_2$. Further, the per-component means identify the ground-truth modes.  
Fig.~\ref{fig:res_i} shows the corresponding convergence graphs, which are fully connected for the unimodal targets. 
For the bimodal target, the intersection graph $G_\cap$ partitions into two connected components, matching the two true modes shown in Fig.~\ref{fig:datagen}a.
The union graph $G_\cup$ then draws edges between these components, reflecting that the modes $\boldsymbol{\mu}_1 = (-0.5, 0)$ and $\boldsymbol{\mu}_2 = (0.5, 0)$ coincide on $x_2$ but differ on $x_1$. 

\begin{table}[H]
\centering

\small
\begin{tabular*}{\textwidth}{@{\extracolsep{\fill}} l l c c c l l}
  \toprule
Target & Coordinate & $\hat{R}$ & $K$ & Component & $n_c$ & $\max_Q \hat{R}_{ij}$ \quad Mean [95\% CI] \\
\midrule 

\multirow{2}{*}{Uniform}
  & $x_1$ & 1.001 & 1 & 1 & 8 & 1.005 \quad $-$0.003 [$-$0.947, $\phantom{-}$0.948] \\
\arrayrulecolor{gray!40}
\cmidrule{2-7}
\arrayrulecolor{black}
  & $x_2$ & 1.000 & 1 & 1 & 8 & 1.002 \quad $\phantom{-}$0.001 [$-$0.956, $\phantom{-}$0.952] \\
\midrule

\multirow{2}{*}{\shortstack[l]{Unimodal\\ Gaussian}}
  & $x_1$ & 1.000 & 1 & 1 & 8 & 1.006 \quad $\phantom{-}$0.004 [$-$0.727, $\phantom{-}$0.741] \\
\arrayrulecolor{gray!40}
\cmidrule{2-7}
\arrayrulecolor{black}
  & $x_2$ & 1.000 & 1 & 1 & 8 & 1.004 \quad $-$0.001 [$-$0.750, $\phantom{-}$0.732] \\
\midrule

\multirow{2}{*}{\shortstack[l]{Anisotropic\\ Gaussian}}
  & $x_1$ & 1.005 & 1 & 1 & 8 & 1.016 \quad $\phantom{-}$0.005 [$-$0.704, $\phantom{-}$0.703] \\
\arrayrulecolor{gray!40}
\cmidrule{2-7}
\arrayrulecolor{black}
  & $x_2$ & 1.005 & 1 & 1 & 8 & 1.017 \quad $\phantom{-}$0.004 [$-$0.695, $\phantom{-}$0.702] \\
\midrule

\multirow{3}{*}{\shortstack[l]{Bimodal\\ Gaussian}}
  & \multirow{2}{*}{$x_1$} & \multirow{2}{*}{5.692} & \multirow{2}{*}{2}
    & 1 & 5 & 1.004 \quad $\phantom{-}$0.499 [$\phantom{-}$0.326, $\phantom{-}$0.673] \\
  & & & & 2 & 3 & 1.002 \quad $-$0.501 [$-$0.675, $-$0.329] \\
\arrayrulecolor{gray!40}
\cmidrule{2-7}
\arrayrulecolor{black}
  & $x_2$ & 1.000 & 1 & 1 & 8 & 1.002 \quad $-$0.001 [$-$0.175, $\phantom{-}$0.175] \\
\bottomrule

\end{tabular*}

\vspace{.1cm}
\caption{
{\bf Diagnostic outputs for the four controlled target distributions.}
For each target and coordinate ($x_1, x_2$) we show the Stan-reported $\hat{R}$, the number of connected components $K$ in the MCMC convergence graph (with threshold $\rho = 1.05$), and per-component summaries: chain count $n_c$, maximum within-component pairwise value $\max_Q \hat{R}_{ij}$, and the target mean  with $95\%$ credible interval.
}
\label{tab:res_i}
\end{table}

\begin{figure}[H]
\centering
\includegraphics[width=\textwidth]{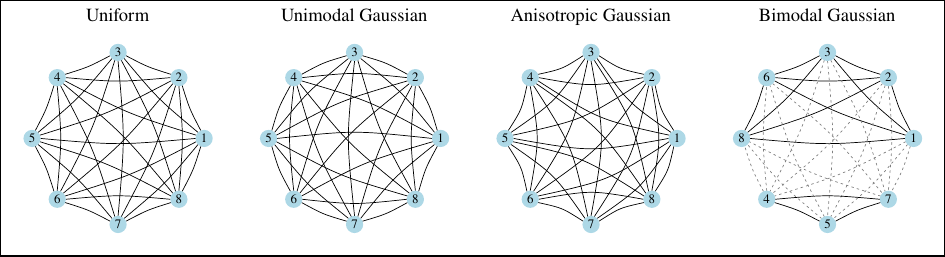}
\caption{
{\bf MCMC convergence graphs for the four controlled target distributions.}
Solid edges connect chain pairs that agree on both coordinates and thus form the intersection graph $G_\cap$. 
Dashed edges connect pairs that agree on one but not both coordinates. 
Together, solid and dashed edges form the union graph $G_\cup$.
The uniform and two unimodal targets yield graphs with a single connected component, indicating global chain agreement.
The bimodal target's intersection graph has two connected components 
$(\{1,2,3,6,8\}, \{4,5,7\})$, indicating local chain agreement. 
The union graph further shows that these components agree on one coordinate ($x_2$).
}
\label{fig:res_i}
\end{figure}

\subsection{Cauchy regression on bimodal data}
\label{sec:results_ii}
Numerical summaries for the three Cauchy regressions on the bimodal data are reported in Table~\ref{tab:res_ii}.
The Stan-reported $\hat{R}$ is large ($\gg 1$) precisely for the parameters whose data were generated from two distinct ground-truth values: 
$b$ under different intercepts but the same slope ($\hat{R} \approx 41$), $a$ under different slopes but the same intercept ($\hat{R} \approx 45$), and both $a$ and $b$ under different slopes and intercepts ($\hat{R} \approx 26$). 
This data-driven bimodality is captured by the graph diagnostic, which partitions the chains into two connected components ($K = 2$) in each of these cases.
Moreover, the per-component means recover the underlying values. 
For the parameters with data generated from a single value, both the Stan-reported $\hat{R}$ and the maximum pairwise $\hat{R}_{ij}$ within the fully connected single component ($K = 1$) are approximately~1. 
The same holds for the scale parameter $\gamma$. 
The convergence graphs for the three Cauchy regressions are shown in Fig.~\ref{fig:res_ii}a, and the posterior for the variant with different slopes and intercepts is shown in Fig.~\ref{fig:res_ii}b, illustrating identification of the two modes.
\begin{table}[H]
\centering
\small
\begin{tabular*}{\textwidth}{@{\extracolsep{\fill}} l l c c c c l}
  \toprule
Model & Parameter & $\hat{R}$ & $K$ & Component & $n_c$ & $\max_Q \hat{R}_{ij}$ \quad Mean [95\% CI] \\
\midrule
\multirow{4}{*}{\shortstack[l]{Different intercepts}}
  & $a$ & 1.004 & 1 & 1 & 8 & 1.011 \quad $-$0.005 [$-$0.112, $\phantom{-}$0.104] \\
\arrayrulecolor{gray!40}
\cmidrule{2-7}
\arrayrulecolor{black}
  & \multirow{2}{*}{$b$} & \multirow{2}{*}{41.448} & \multirow{2}{*}{2}
    & 1 & 4 & 1.016 \quad $\phantom{-}$1.930 [$\phantom{-}$1.810, $\phantom{-}$2.004] \\
  & & & & 2 & 4 & 1.006 \quad $-$1.926 [$-$1.997, $-$1.818] \\
\arrayrulecolor{gray!40}
\cmidrule{2-7}
\arrayrulecolor{black}
  & $\gamma$ & 1.008 & 1 & 1 & 8 & 1.026 \quad $\phantom{-}$0.501 [$\phantom{-}$0.303, $\phantom{-}$0.765] \\
\midrule
\multirow{4}{*}{\shortstack[l]{Different slopes}}
  & \multirow{2}{*}{$a$} & \multirow{2}{*}{45.039} & \multirow{2}{*}{2}
    & 1 & 5 & 1.001 \quad $-$0.951 [$-$0.989, $-$0.904] \\
  & & & & 2 & 3 & 1.003 \quad $\phantom{-}$0.958 [$\phantom{-}$0.911, $\phantom{-}$0.995] \\
\arrayrulecolor{gray!40}
\cmidrule{2-7}
\arrayrulecolor{black}
  & $b$ & 1.009 & 1 & 1 & 8 & 1.017 \quad $\phantom{-}$0.002 [$-$0.020, $\phantom{-}$0.023] \\
\arrayrulecolor{gray!40}
\cmidrule{2-7}
\arrayrulecolor{black}
  & $\gamma$ & 1.004 & 1 & 1 & 8 & 1.010 \quad $\phantom{-}$0.164 [$\phantom{-}$0.126, $\phantom{-}$0.210] \\
\midrule
\multirow{5}{*}{\shortstack[l]{Different slopes\\ and intercepts}}
  & \multirow{2}{*}{$a$} & \multirow{2}{*}{26.044} & \multirow{2}{*}{2}
    & 1 & 6 & 1.004 \quad $-$0.455 [$-$0.484, $-$0.423] \\
  & & & & 2 & 2 & 1.001 \quad $\phantom{-}$0.456 [$\phantom{-}$0.422, $\phantom{-}$0.485] \\
\arrayrulecolor{gray!40}
\cmidrule{2-7}
\arrayrulecolor{black}
  & \multirow{2}{*}{$b$} & \multirow{2}{*}{25.877} & \multirow{2}{*}{2}
    & 1 & 6 & 1.004 \quad $-$0.241 [$-$0.257, $-$0.224] \\
  & & & & 2 & 2 & 1.000 \quad $\phantom{-}$0.246 [$\phantom{-}$0.228, $\phantom{-}$0.262] \\
\arrayrulecolor{gray!40}
\cmidrule{2-7}
\arrayrulecolor{black}
  & $\gamma$ & 1.002 & 1 & 1 & 8 & 1.015 \quad $\phantom{-}$0.111 [$\phantom{-}$0.091, $\phantom{-}$0.135] \\
\bottomrule
\end{tabular*}

\caption{
{\bf Diagnostic outputs for the three Cauchy regressions on bimodal data.}
For each model and parameter we show the Stan-reported $\hat{R}$, the number of connected components $K$ in the MCMC convergence graph (with threshold $\rho = 1.05$), and per-component summaries: chain count $n_c$, maximum within-component pairwise value $\max_Q \hat{R}_{ij}$, and the posterior mean with $95\%$ credible interval.
}
\label{tab:res_ii}

\end{table}

\begin{figure}[H]
\centering
\includegraphics[
  width=\textwidth
]{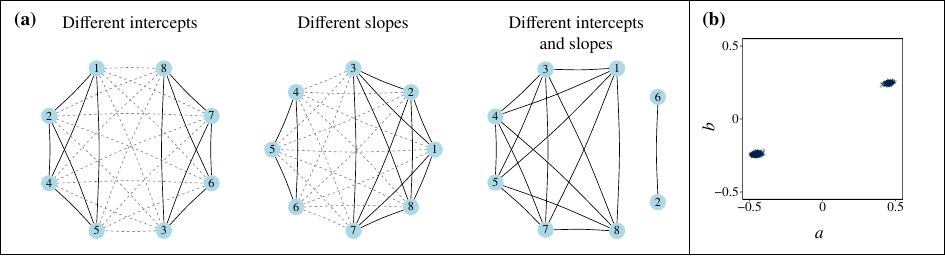}
\caption{
{\bf MCMC convergence graphs for Cauchy regression on bimodal data.}
(a) Solid edges connect chain pairs that agree on both monitored parameters and thus form the intersection graph $G_\cap$. Dashed edges connect pairs that agree on only one parameter (parameter $a$ when the data is generated from linear regression with different intercepts but the same slope; $b$ with different slopes but the same intercept). Together, solid and dashed edges form the union graph $G_\cup$.
(b) The posterior resulting from linear regression-generated data with two different intercepts and slopes.
}
\label{fig:res_ii}
\end{figure}

\enlargethispage{\baselineskip}
\subsection{Pharmacokinetic model on clinical data}
\label{sec:results_iii}
Numerical summaries for the three-compartment PK model are reported in Table~\ref{tab:res_iii}. 
The table shows that the elimination rate $k_{10}$ and central-compartment volume $V_C$, 
as well as the error parameters $\sigma_{\text{add}}$ and $\sigma_{\text{prop}}$, 
have Stan-reported $\hat{R}$ values close to 1.
In line with this, a single connected component ($K = 1$) is identified on these parameters. 
The four exchange rates $k_{12}, k_{21}, k_{13}, k_{31}$, by contrast, are flagged by large Stan-reported values ($\hat{R} \gg 1$). 
Our graph diagnostic identifies two connected components on these parameters ($K = 2$), 
and inspection of the inferred parameter values demonstrates a label-switching symmetry: the values taken by $(k_{12}, k_{21})$ in one component are taken by $(k_{13}, k_{31})$ in the other, and vice versa. 
This is illustrated in Fig.~\ref{fig:res_iii} and makes these parameters non-identifiable, 
as the switch leaves the predicted concentrations unchanged.


\begin{table}[H]
\centering
\small
\begin{tabular*}{\textwidth}{@{\extracolsep{\fill}} l l c c c c l}
  \toprule
Model & Parameter & $\hat{R}$ & $K$ & Component & $n_c$ &  $\max_Q \hat{R}_{ij}$  \quad Mean [95\% CI] \\
\midrule
\multirow{12}{*}{\shortstack[l]{Three-compartment\\ PK model}}
  & $k_{10}$ & 1.002 & 1 & 1 & 8 & 1.006 \quad $\phantom{-}$0.385 [$\phantom{-}$0.363, $\phantom{-}$0.409] \\
  \arrayrulecolor{gray!40}
  \cmidrule{2-7}
  \arrayrulecolor{black}
  & \multirow{2}{*}{$k_{12}$} & \multirow{2}{*}{7.991} & \multirow{2}{*}{2}
    & 1 & 3 & 1.001 \quad $\phantom{-}$0.014 [$\phantom{-}$0.012, $\phantom{-}$0.017] \\
  & & & & 2 & 5 & 1.007 \quad $\phantom{-}$0.214 [$\phantom{-}$0.185, $\phantom{-}$0.247] \\
  \arrayrulecolor{gray!40}
  \cmidrule{2-7}
  \arrayrulecolor{black}
  & \multirow{2}{*}{$k_{21}$} & \multirow{2}{*}{14.459} & \multirow{2}{*}{2}
    & 1 & 3 & 1.001 \quad $\phantom{-}$0.016 [$\phantom{-}$0.012, $\phantom{-}$0.020] \\
  & & & & 2 & 5 & 1.006 \quad $\phantom{-}$0.167 [$\phantom{-}$0.155, $\phantom{-}$0.180] \\
  \arrayrulecolor{gray!40}
  \cmidrule{2-7}
  \arrayrulecolor{black}
  & \multirow{2}{*}{$k_{13}$} & \multirow{2}{*}{10.366} & \multirow{2}{*}{2}
    & 1 & 3 & 1.001 \quad $\phantom{-}$0.213 [$\phantom{-}$0.185, $\phantom{-}$0.245] \\
  & & & & 2 & 5 & 1.003 \quad $\phantom{-}$0.015 [$\phantom{-}$0.012, $\phantom{-}$0.017] \\
  \arrayrulecolor{gray!40}
  \cmidrule{2-7}
  \arrayrulecolor{black}
  & \multirow{2}{*}{$k_{31}$} & \multirow{2}{*}{17.688} & \multirow{2}{*}{2}
    & 1 & 3 & 1.004 \quad $\phantom{-}$0.167 [$\phantom{-}$0.155, $\phantom{-}$0.179] \\
  & & & & 2 & 5 & 1.003 \quad $\phantom{-}$0.016 [$\phantom{-}$0.012, $\phantom{-}$0.020] \\
  \arrayrulecolor{gray!40}
  \cmidrule{2-7}
  \arrayrulecolor{black}
  & $V_C$ & 1.002 & 1 & 1 & 8 & 1.005 \quad $\phantom{-}$6.665 [$\phantom{-}$6.241, $\phantom{-}$7.070] \\
  \arrayrulecolor{gray!40}
  \cmidrule{2-7}
  \arrayrulecolor{black}
  & $\sigma_{\mathrm{add}}$ & 1.002 & 1 & 1 & 8 & 1.006 \quad $\phantom{-}$0.049 [$\phantom{-}$0.034, $\phantom{-}$0.065] \\
  \arrayrulecolor{gray!40}
  \cmidrule{2-7}
  \arrayrulecolor{black}
  & $\sigma_{\mathrm{prop}}$ & 1.001 & 1 & 1 & 8 & 1.006 \quad $\phantom{-}$0.299 [$\phantom{-}$0.287, $\phantom{-}$0.311] \\
\bottomrule
\end{tabular*}

\vspace{.1cm}
\caption{
{\bf Diagnostic outputs for the three-compartment pharmacokinetic model.}
For each parameter we show the Stan-reported $\hat{R}$, the number of connected components $K$ in the MCMC convergence graph (with threshold $\rho = 1.05$), and per-component summaries: chain count $n_c$, maximum within-component pairwise value $\max_Q \hat{R}_{ij}$, and the posterior mean with $95\%$ credible interval.
Rate constants are in $\mathrm{min}^{-1}$, 
$V_C$ in L, $\sigma_{\text{add}}$ in $\mathrm{ng}\,\mathrm{mL}^{-1}$, and $\sigma_{\text{prop}}$ is dimensionless. 
}
\label{tab:res_iii}
\end{table}
\begin{figure}[H]
\centering
\includegraphics[
  width=\textwidth
]{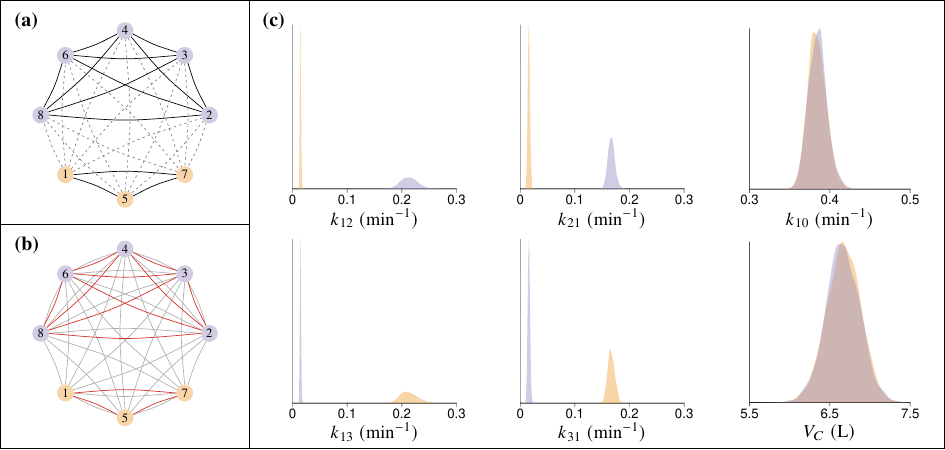}
\caption{
{\bf MCMC convergence graphs and posteriors for the three-compartment pharmacokinetic model.}
(a) The intersection graph $G_\cap$ (solid lines) has two connected components (\{2,3,4,6,8\} and \{1,5,7\}) whose chains agree on all parameters.  
Nodes are coloured by connected component. 
The union graph $G_\cup$ (solid and dashed lines together) is fully connected, as every pair of chains agrees on at least one parameter.  
(b) Per-parameter graphs for $V_C$ (grey lines) and $k_{31}$ (red lines) with overlaid edges, demonstrating global agreement in $V_C$ and local agreement in $k_{31}$. 
(c) Marginal posterior densities, coloured by connected components in $G_\cap$.
}
\label{fig:res_iii}
\end{figure}

\section{Discussion}
\label{sec:discussion}
Having established the theoretical basis of the MCMC convergence graph (Section~\ref{sec:theory}) and empirically demonstrated its use on a selection of worked examples (Section~\ref{sec:results}), we now discuss practical considerations of the diagnostic.

\subsection{Chain initialisation and count selection}
The MCMC convergence graph can only reveal modes that at least one chain visits. To recover every mode, therefore, chains must be initialised broadly across the parameter space. Random initialisation from a broad prior, or optimisation-based initialisation such as Pathfinder~\cite{zhang2022pf}, reduces the risk that a mode is left unvisited.
Because the chain-count choice trades off mode coverage against spurious isolation (Section~\ref{sec:chaincountmode}), we recommend running the diagnostic across a sweep of chain counts $n$ (see Appendix B.1 for an example). For each monitored parameter, plotting $K$ and $I/n$ against $n$ reveals the diagnostic's stability: a plateau in $K$ indicates reliable mode identification, whereas persistent changes signal incomplete mode coverage or unstable pairwise $\hat{R}_{ij}$ estimates. 
When the chain count substantially exceeds the identified component count ($n \gg K$), an isolated chain is more likely to reflect finite-sample noise than a mode explored by a single chain.
This effect grows with the per-chain risk $\xi$ (Section~\ref{sec:chaincountmode}) and is therefore more pronounced for difficult posterior geometries. 
Focused inspection of the outlier chain (its trace, split-$\hat{R}$, and per-parameter pairwise $\hat{R}_{ij}$ values) can further distinguish spurious isolation from a chain that has genuinely failed to mix.

\subsection{Selecting the threshold $\rho$}
\label{sec:disc_rho}
The threshold $\rho$ controls the granularity of the resulting graph.
In the theoretical limit ($N \to \infty$, Assumption~\ref{assumption:modes}), the pairwise statistic converges to 1 for chains exploring the same mode and to a pair-specific limit $\lambda_{ij} > 1$ otherwise, so any threshold $\rho \in (1, \bar{\rho})$ recovers the correct partition (Proposition~\ref{prop:asymptotic}).
In this idealised regime chain agreement is transitive: 
if chains $i$ and $j$ share an edge and chains $j$ and $k$ share an edge, 
then $i$ and $k$ share an edge. 
For finite $N$, however, $\hat{R}_{ij}$ is a noisy estimate, and computed values near $\rho$ may fall on either side of it. Transitivity can then fail, but the identified connected-component structure remains informative, since chains exploring the same mode are typically linked through a path of intermediate agreements. 

Because each pairwise $\hat{R}_{ij}$ is computed from only two chains, individual pairwise values can be larger than a standard $\hat{R}$ computed from more chains, even for well-mixed unimodal posteriors (see, e.g., the unimodal cases in Table~\ref{tab:res_i}). This motivates using a slightly relaxed convergence threshold. 
The choice of $\rho$ trades resolution against robustness: a small threshold (e.g., $\rho = 1.01$) resolves finer structure but is more prone to non-transitivity and spurious isolated chains, whereas a large threshold (e.g., $\rho = 1.1$) may merge distinct modes. 
When robustness is in doubt, we recommend repeating the analysis across a range of thresholds and confirming that $K$ is stable (see Appendix B.2 for an example). 
As this is done entirely post-sampling, it is inexpensive.

\subsection{Acting on the graph output}
\label{sec:sub_act_on_graph}
When the classical (e.g., Stan-reported) $\hat{R}$ and the graph-based diagnostic agree, interpretation is straightforward: an $\hat{R}$ close to 1 with a single component ($K = 1$) supports reliable inference, whereas an elevated $\hat{R}$ with widespread isolation ($I/n$ large) calls for standard troubleshooting. The graph adds diagnostic value when the two disagree. 
When a classical $\hat{R}$ value is elevated but the corresponding pairwise $\hat{R}_{ij}$ values within each connected component remain low, chains have partitioned into groups that each mix well within their posterior region but disagree across regions. 
The appropriate follow-up depends on the origin of the local chain agreement, specifically whether it arises from features of the data or of the model. 
In the former case, identifying the full posterior is often desirable, and the graph helps us characterise it without discarding the inference as a sampling failure (see, e.g., Fig.~\ref{fig:res_ii}). 
In the latter case, the graph helps us detect non-identifiability in the model structure (see, e.g., Fig.~\ref{fig:res_iii}), which gives insight into model behaviour and can, if desired, be addressed through model reformulation, reparameterisation, or stricter priors.

\newpage

\subsection{Mode partitioning versus mode weighting}
The proportion of chains in each connected component depends on chain initialisation, the sampler used, and posterior geometry. It should therefore not be interpreted as the posterior weight of the corresponding region. This is an important limitation of the MCMC convergence graph.

Estimating posterior weights from non-mixing MCMC is a recognised difficulty. A pragmatic alternative is Bayesian stacking~\cite{yao2022}, which combines the inferences from groups of chains using weights chosen to optimise predictive accuracy rather than to estimate posterior mass. The convergence graph and stacking therefore play complementary roles: the former partitions chains by their explored posterior region, whereas the latter aggregates their inferences for prediction.

\subsection{Scaling to high-dimensional posteriors}
In theory, the MCMC convergence graph extends to arbitrary dimension. 
Because the graph is computed post-sampling, and sampling dominates the overall computational cost, the main practical concern is the sampling budget rather than the graph computation. 
However, as the dimension $d$ grows, so too does the risk that global agreement fails in at least one dimension, and thus we can expect $K_{\max} = \max_s K_s$ to grow with $d$. 
Since the multivariate intersection graph $G_\cap$ is at least as fragmented as the most partitioned dimension, reliable identification requires $n \geq 2 K_{\max}$ chains by Proposition~\ref{prop:lower-bound}. 
In other words, larger dimension may require more chains, and thus more computation. 

Visual diagnostics also become inherently less informative at large dimension. 
In practice one can isolate marginals for closer inspection, for example by focusing on the parameters of particular interest or those where global chain agreement fails. 
One can also contract each connected component of $G_\cap$ to a single node, joined to another wherever the two agree on some dimension in $G_\cup$. 
This gives a coarser graph that summarises how the modes relate and remains legible as $d$ grows. 
Even so, it is the visual assessment that does not scale, not the graph structure itself: the same components, $K$, and $I$ are equally available in tabular form, which is often the clearer view in high dimension. 

\subsection{Extending the pairwise statistic to $k$-wise statistics}
The MCMC convergence graph presented in this work builds on pairwise $\hat{R}_{ij}$ values between two chains. 
Since standard convergence diagnostics typically recommend at least four chains~\cite{stan2026}, extending to $k$-wise statistics is a natural generalisation, with properties (P1)--(P3) from Section~\ref{sec:pairwise-rhat} carrying over directly to $k$-wise analogues (P1$'$)--(P3$'$):

\begin{enumerate}[label=(P\arabic*$'$),leftmargin=*,labelindent=\parindent,itemsep=2pt]
\item \emph{Symmetry.} $\hat{R}_{i_1,\ldots,i_k}$ is invariant under permutations of the indices.
\item \emph{Agreement.} $\hat{R}_{i_1,\ldots,i_k} \xrightarrow{p} 1$ as $N \to \infty$ if $\pi_{i_1} = \cdots = \pi_{i_k}$.
\item \emph{Disagreement.} $\hat{R}_{i_1,\ldots,i_k} \xrightarrow{p} \lambda_{i_1,\ldots,i_k} > 1$ as $N \to \infty$ otherwise.
\end{enumerate}

\noindent We adopt pairwise ($k=2$) as the primitive throughout the paper, chosen for three reasons.
First, chain agreement is transitive in the limit $N \to \infty$ (Section \ref{sec:disc_rho}). In this limit, pairwise checks are therefore sufficient to recover the partition into connected components, and higher-order checks do not refine it. 
Second, the pairwise structure induces an ordinary graph rather than a hypergraph, which enables direct visualisation and standard connected-component analysis. 
Third, although the two-chain $\hat{R}_{ij}$ is more variable than a standard multi-chain $\hat{R}$, this variability is offset by the aggregate structure of the graph, since each chain participates in $n-1$ pairwise comparisons.
A systematic study of $k$-wise variants and hypergraph diagnostics is left to future work.

\newpage

\section{Conclusion}
\label{sec:conclusion}
We introduced the MCMC convergence graph, a diagnostic that turns pairwise $\hat{R}_{ij}$ values into an undirected graph whose connected components reveal which chains explore the same posterior mode, with theoretical guarantees when the modes are disjoint (Assumption~\ref{assumption:modes}).
Where a classical $\hat{R}$ returns a single verdict of non-convergence, the graph characterises the structure behind it, separating chains that have partitioned into distinct modes from a chain that has failed to mix. 
This distinction is actionable.
If the local agreement comes from the data, the modes may correspond to distinct regimes in the underlying system, be it biological, economic, or social, and are worth understanding in their own right. 
On the other hand, if it comes from the model structure, it reflects a non-identifiability that can be addressed through reparameterisation or stricter priors.
Such a non-identifiability is also informative, giving insight into the model's structure.

Uncovering multimodality by pairwise chain analysis, the MCMC convergence graph complements classical convergence tools and is freely available in the R package \texttt{mcmcConvergenceGraph}, which takes MCMC draws from any sampler and returns both numerical and graphical summaries.
It allows us to diagnose multimodality rather than dismiss it as a sampling failure.

\section*{Funding}
This work was partially supported by the Wallenberg AI, Autonomous Systems and
Software Program (WASP) funded by the Knut and Alice Wallenberg Foundation, which supported all authors. 
EL was also supported by grants from Chalmers Area of Advance Health Engineering, Gender Initiative for Excellence, Ragnar Söderbergs Foundation, and the Swedish Research Council (2024-04145).
SH was also supported by the Swedish Research Council (2024-05621), the Wenner-Gren Foundations (WGF2022-0044), and the Kjell och M{\"a}rta Beijer Foundation. 

\section*{Code and data availability}
All code and data is freely available on the project's GitHub page: 
\url{https://github.com/chain-diagnostics/mcmc-convergence-graph}. 
The results in this paper use version~1.0.0; 
the latest release is available at the same address.

\section*{Author contributions}
\textbf{CCG}: Software, Investigation, Writing -- original draft, review \& editing. 
\textbf{EL}: Supervision, Funding acquisition, Writing -- review \& editing. 
\textbf{SH}: Conceptualisation, Methodology, Supervision, Funding acquisition, Writing -- original draft, review \& editing. 

\section*{AI use statement}
The authors used AI to edit prose and figures. All conceptual, methodological, theoretical, and empirical contributions are the authors' own.

\printbibliography
\newpage

\appendix
\setcounter{figure}{0}
\setcounter{page}{1}
\renewcommand{\thefigure}{A\arabic{figure}}

\section*{Appendix A: The \texttt{mcmcConvergenceGraph} pipeline}
\label{app:package}
The package runs the diagnostic in four stages, shown in Fig.~\ref{fig:workflow}. 
The individual functions at each stage are documented on the project's GitHub page.
The wrapper \texttt{mcmcgraph()} runs all four stages in a single call (Listing~\ref{lst:mcmcgraph}).
The package builds on base R~\cite{r2025}, and graphs are rendered with \texttt{igraph}~\cite{igraph2026}. 

\begin{figure}[H]
\centering
\includegraphics[width=\textwidth]{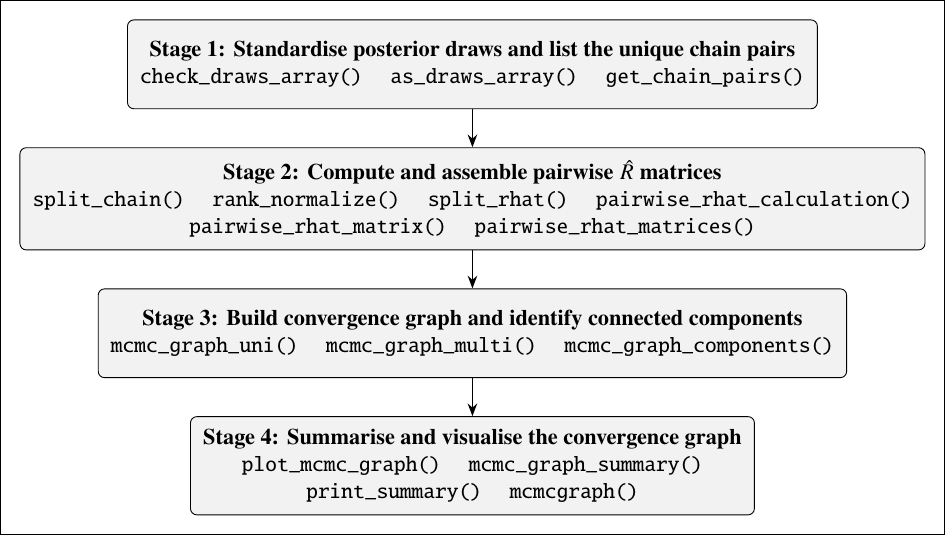}
\caption{
\textbf{The four-stage \texttt{mcmcConvergenceGraph} workflow.}
Stage~1 organises posterior draws into a standardised array and lists the unique chain pairs.
Stage~2 computes the pairwise $\hat{R}_{ij,s}$ diagnostic for every chain pair $\{i,j\}$ and parameter $s$, and assembles these values into $d$ symmetric $n \times n$ matrices, where $d$ is the number of monitored parameters and $n$ the number of chains.
Stage~3 applies the threshold $\rho$ to build the per-parameter graphs $G_{\rho,s}$ and their combinations ($G_\cap$, $G_\cup$), then identifies the connected components.
Stage~4 returns numerical summaries and a diagnostic plot to the user, on screen or as files.
The function names and flow correspond to package version~1.0.0; 
see the GitHub page for the latest version.
}

\label{fig:workflow}
\end{figure}

\section*{Appendix B: Sensitivity analyses}
\label{app:sensitivity}
We assess sensitivity to the chain count $n$ and threshold $\rho$ on a trimodal Gaussian target:
\begin{align*}
  \text{Trimodal Gaussian:} &\quad \pi = \tfrac{1}{3} \sum_{k=1}^{3} \mathcal{N}(\boldsymbol{\mu}_k,\, \sigma^2\, \mathbf{I}),
\end{align*}
with $\boldsymbol{\mu}_1 = (0.5, 0.5)$, $\boldsymbol{\mu}_2 = (-0.5, 0.5)$, and
$\boldsymbol{\mu}_3 = (0, -0.5)$ (Fig.~\ref{fig:appendix_sensitivity}a).
Chains are run in Stan (1000 warmup and 1000 sampling iterations) from uniform initialisation on $[-1, 1]^2$.

\paragraph{Appendix B1: Sensitivity to the chain count $n$}
We perform independent Stan fits at $\sigma^2 = 0.008$ and $\rho = 1.05$ for chain counts
$n = 2, \dots, 100$, repeating each with 10 random seeds.
Results are shown in Fig.~\ref{fig:appendix_sensitivity}b.
Plotting $K$ against $n$ shows that mode identification stabilises once enough chains are used:
$K = 3$ for $x_1$ from $n = 23$ and $K = 2$ for $x_2$ from $n = 7$, across all seeds, and is
unstable below these counts.
Plotting $I/n$ against $n$ shows that isolation is rare beyond small $n$: $I/n = 0$ for $x_2$
from $n = 7$, and near 0 for $x_1$, where an isolated chain still appears spuriously at larger $n$.
Fig.~\ref{fig:appendix_sensitivity}c shows the convergence graph at $n = 24$ for one seed.

\newpage

\paragraph{Appendix B2: Sensitivity to the threshold $\rho$}
We fix the chain count at $n = 24$ and perform one Stan fit at each of two within-mode variances, 
$0.008$ and $0.0114$ (Fig.~\ref{fig:appendix_sensitivity}d).
In both cases the correct mode count is recovered for all $\rho \geq 1.01$ ($K = 3$ for $x_1$ and $K = 2$ for $x_2$), so setting $\rho = 1.05$ yields robust mode identification.
The correct $K$ can appear already at $\rho = 1$ because, just like a classical $\hat{R}$, a pairwise $\hat{R}_{ij}$ can fall below 1 in finite samples.
The isolation plots show that $I/n = 0$ from $\rho \geq 1.01$ for the lower-variance target,
whereas the higher-variance target requires $\rho > 1.11$ to reach $I/n = 0$ on both coordinates. 
For the latter target, $\rho = 1.05$ is therefore not sufficient for an isolation-free graph. 
Instead, a higher threshold is required, illustrating a resolution--robustness trade-off.

\begin{figure}[H]
\centering
\includegraphics[
  width=\textwidth,
]{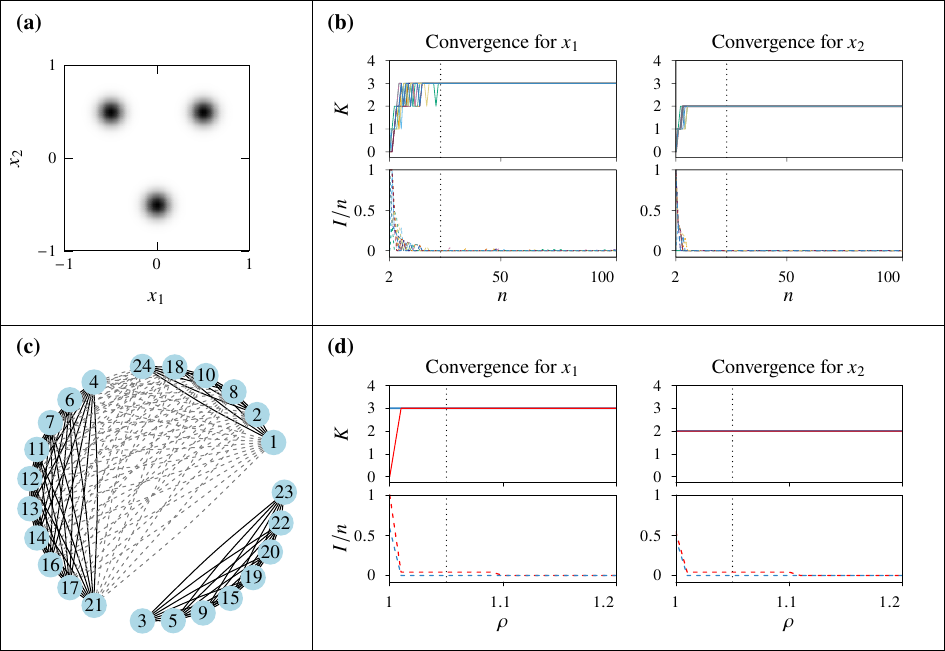}
\caption{
{\bf MCMC convergence graph sensitivity to chain count $n$ and threshold $\rho$.}
(a) Trimodal Gaussian target distribution with within-mode variance $\sigma^2 = 0.008$.
(b) $K$ (top) and $I/n$ (bottom) versus chain count $n$ at $\rho = 1.05$.
Each coloured line shows the result for one random seed, and the dotted vertical line marks $n = 24$. 
(c) Convergence graph from a single run at $n = 24$ ($\sigma^2 = 0.008$, $\rho = 1.05$). 
Solid edges connect chain pairs that agree on both coordinates (the intersection graph $G_\cap$), dashed edges those that agree only on $x_2$, and together they form the union graph $G_\cup$.
(d) $K$ (top) and $I/n$ (bottom) versus threshold $\rho$ at $n = 24$. 
Results for $\sigma^2 = 0.008$ (blue) and $\sigma^2 = 0.0114$ (red) are shown, with the dotted vertical line marking $\rho = 1.05$.
}

\label{fig:appendix_sensitivity}
\end{figure}

\end{document}